\newcommand{\pa}{\mathbf{PA}}
\newcommand{\aur}{\text{\textsf{AUR}}}
\newcommand{\pamax}{\textbf{PA}$_{\max}$}
\newcommand{\uij}{u_{ij}}
\newcommand{\xij}{x_{ij}}
\newcommand{\us}{u^*}
\newcommand{\Qs}{Q^*}
\newcommand{\Rs}{R^*}
\renewcommand{\tt}{\tilde{t}}
\newcommand{\tA}{\tilde{A}}
\newcommand{\vecu}{\vec{u}}
\newcommand{\vecb}{\vec{b}}
\newcommand{\veca}{\vec{a}}
\newcommand{\vecx}{\vec{x}}
\newcommand{\ucap}{\hat{u}}
\newcommand{\sumjm}{\sum_{j=1}^m}
\newcommand{\SW}{\text{\textsf{SW}}}
\newcommand{\swa}{\text{\textsf{SW}}^A}
\newcommand{\swopt}{\text{\textsf{SW}}_{\text{\textsf{OPT}}}(\vecu_1,\vecu_2)}
\newcommand{\swoptnull}{\text{\textsf{SW}}_{\text{\textsf{OPT}}}}
\newcommand{\rr}{\mathbb{R}}
\newcommand{\ra}{\rightarrow}
\DeclareMathOperator*{\argmax}{arg\,max}
\newtheorem{theorem}{Theorem}
\newtheorem{prop}[theorem]{Proposition}
\title{Better Strategyproof Mechanisms without Payments or Prior\\
--- An Analytic Approach\thanks{
This work has received funding from the Vienna Science and Technology Fund (WWTF) through project ICT10-002,
and from the European Research Council under the European Union's
Seventh Framework Programme (FP7/2007-2013) / ERC Grant Agreement no.~340506.
}}
\author{Yun Kuen Cheung\\
University of Vienna, Faculty of Computer Science, Vienna, Austria\\
yun.kuen.cheung@univie.ac.at}
\date{}
\begin{document}

\maketitle

\begin{abstract}
We revisit the problem of designing strategyproof mechanisms for allocating divisible items among two agents who have linear utilities,
where payments are disallowed and there is no prior information on the agents' preferences.
The objective is to design strategyproof mechanisms which are competitive against the most efficient (but not strategyproof) mechanism.

For the case with two items:
\begin{itemize}[leftmargin=0.3cm]
\item We provide a set of sufficient conditions for strategyproofness.
\item We use an analytic approach to \emph{derive} strategyproof mechanisms which are more competitive than all prior strategyproof mechanisms.
\item We improve the linear-program-based proof of Guo and Conitzer~\cite{GuoC2010} to show new upper bounds on competitive ratios.
\item We provide the first \emph{compact} proof on upper bound of competitiveness.
\end{itemize}

For the cases with any number of items, we build on the Partial Allocation mechanisms introduced by Cole et al.~\cite{ColeGG2013-EC,ColeGG2013-AAMAS}
to design a strategyproof mechanism which is $0.67776$-competitive, breaking the $\frac 23$ barrier.

We also propose a new sub-class of strategyproof mechanisms for any numbers of agents and items,
which we call it \emph{Dynamic-Increasing-Price} mechanisms,
where each agent purchases the items using virtual money, and the prices of the items depend on other agents' preferences.
\end{abstract}

\section{Introduction}

Competition for resources is one of the most primitive activities of human.
The problems of distributing resources among competing agents are found wherever human exists.
We have developed various systems and tools, e.g., economic markets, auctions/mechanisms, voting systems and money,
to facilitate the distributions and exchanges of resources.
The study of these problems in various settings is now a popular topic concerning both computer scientists and economists,
in the field coined as \emph{Multiagent Resource Allocation};
see \cite{Chev2006} for a fairly recent survey.

While money is provably reducing the complexity of many such problems, in many scenarios monetary transfer is inapplicable;
canonical examples include hospital/facility location determination, peer-to-peer sharing network,
and distribution of computing resources among staffs in the same company.
Mechanism design problems without monetary transfer are thus naturally motivated.

Briefly speaking, a mechanism is a combination of a communication protocol and an algorithm,
for agents to reveal their preferences to an auctioneer, and for the auctioneer to determine a \emph{good allocation} of resources.
There are different ways to interpret the meaning of ``good allocation'', including
social welfare, various measures of fairness (often coined as ``cake cutting''),
or revenue of the auctioneer (when monetary transfer is feasible).
Independent of the interpretation, a favourable feature of a mechanism is strategyproofness
--- the mechanism \emph{motivates} the agents to reveal their preferences \emph{truthfully}.
Strategyproof (SP) mechanism design without monetary transfer has been studied under various contexts, e.g.,\cite{SchummerV2002,ProcacciaT2009,DokowFMN2012,DughmiG2010}.

We focus on the problem formulated by Guo and Conitzer~\cite{GuoC2010}:
design SP mechanisms for allocating divisible items among two agents who have linear preferences over the items.
The target is to attain social welfares which are competitive against those in the first-best mechanism (which is not SP).

In contrast to most of the prior work, we take on an analytic approach for the problem.
While all known SP mechanisms (which we will discuss next) are somewhat naturally motivated,
they do not shed any insight on how an optimal mechanism should look like.
We will present results obtained using analytic methods,
which will suggest that analytical insights are necessary for seeking the optimal mechanisms.

\subsection{Related Work}

Guo and Conitzer~\cite{GuoC2010} considered a sub-class of SP mechanisms called Swap-Dictatorial mechanisms,
in which each agent is a \emph{dictator} with probability $\frac 12$,
who chooses her favourite allocation from a predefined set of allowable allocations,
and the other agent is allocated the remaining items.\footnote{Swap-Dictatorial mechanisms can be generalized to more than two agents,
by first generating a random order of the agents, and then each agent takes turn to choose her favourite allocation.}
They studied two sub-classes of Swap-Dictatorial mechanisms, Increasing-Price (IP) mechanisms and Linear-Increasing-Price (LIP) mechanisms.
For the case with two items, they showed that there is a LIP mechanism which is $0.828$-competitive against the first-best mechanism;
they used a linear program to show that no SP mechanism can be better than $0.841$-competitive.
They also showed that as the number of items goes to infinity, IP and LIP mechanisms have maximal competitiveness of $\frac 12$.

Han et al.~\cite{HanSTZ2011} showed a number of upper bound results on the competitiveness of SP mechanisms, when the numbers of agents and/or items increase.
In particular, they showed that no swap-dictatorial mechanism can be better than $\left(\frac 12 + o_m(1)\right)$-competitive for $m$ items.
In addition, they proved the following characterization result:
in the case with two items, if a mechanism $A$ is symmetric and second order continuously differentiable,
then $A$ is SP if and only if $A$ is swap-dictatorial.

Cole, Gkatzelis and Goel \cite{ColeGG2013-EC} proposed another sub-class of SP mechanisms (for any number of agents)
called \emph{Partial Allocation} (PA) mechanisms, which are not swap-dictatorial.
They showed in another work \cite{ColeGG2013-AAMAS} that a variant of PA mechanism is $\frac 23$-competitive for two agents and any number of items.

Recently, non-SP resource allocation mechanisms are also under study.
The mechanisms are considered as games and the bids of agents are strategies.
The canonical measure of efficiency is the Price of Anarchy.
See~\cite{FLZ2009,BranzeiCDFF2014} and the references therein for more details.

We note that in every swap-dictatorial mechanism, all items are always completely allocated among the agents
 --- the auctioneer never holds some of the items from being allocated. However, this is not true in PA mechanism.
We say a mechanism is \emph{full} if all items are always completely allocated among the agents,
and say it is \emph{partial} otherwise.

\subsection{Our Contribution}

Our main contribution is to use analytic methods to \emph{derive} SP mechanisms with competitiveness better than those previously known.
We also improve a linear-program (LP) based proof to show new upper bounds on the competitive ratios for SP full and partial mechanisms.
In addition, we provide the first \emph{compact} upper bound proof.

In Sections \ref{sect:suff-condition}---\ref{sect:human-upper-bound}, we focus on the case with two agents and two items.
In Section \ref{sect:suff-condition}, we first prove a characterization of symmetric SP mechanisms,
which is essentially the same as the Rochet's characterization~\cite{Rochet1985}.
Then we provide a set of sufficient conditions for symmetric SP mechanisms.
We note that while our set of sufficient conditions and the characterization in~\cite{HanSTZ2011} are both of analytical flavor,
the two results are not comparable:
their result focuses on conditions that yield equivalence between SP mechanisms and swap-dictatorial mechanisms (which must be full mechanisms),
while our result is applicable for a broad sub-class of partial mechanisms,
and also mechanisms which are not second order continuously differentiable.

In Section \ref{sect:fully-two-items}, we look into the solution to a LP of Guo and Contizer~\cite{GuoC2010}
for making a few observations and heuristic assumptions, which allow us to \emph{derive} a $\frac 56$-competitive full mechanism;
we believe it is an optimal full mechanism.\footnote{There may be more than one optimal full mechanisms.}
In Section \ref{sect:partial-two-items}, by using our set of sufficient conditions, we consider a sub-family of SP partial mechanisms,
and show that one of such mechanisms is strictly better than $\frac 56$-competitive.
This may be surprising to some practitioners, since it suggests that in general,
the competitiveness can be improved by suitably holding a fraction of items from being allocated.

Guo and Conitzer used the LP to show an upper bound of $0.841$ on the competitiveness of SP full mechanisms.
We will discuss how to prune out a lot of \emph{unnecessary} constraints from their LP.
This allows us to solve the LP with much refined resolution,
and to improve the upper bound to $\frac 56 + \epsilon$, where $\epsilon < 10^{-9}$; we believe the final answer is $\frac 56$.
With a minor modification to their LP, we show an upper bound of $0.8644$ on the competitiveness of SP partial mechanisms.

While the LP-based upper bound proofs are \emph{legitimate},
they may look unsatisfactory to some researchers, due to two reasons.
First, such proofs are hardly verifiable by researchers without the use of a computer.
Second, such LPs are extremely huge for three or more items and thus not solvable in practice,
so they shed no insight for providing a better upper bound when the number of items increases.
Therefore, a \emph{compact} upper bound proof, i.e., a proof which can be easily verifiable by researchers, is preferred.
In Section \ref{sect:human-upper-bound}, we use the SP characterization of Rochet~\cite{Rochet1985}
to provide the first compact proof; the upper bound is $0.9523$.
While this upper bound is worse than those yielded by LP-based proofs,
the compact proof is worth an attention since it might shed insight for generalizations.

In Section \ref{sect:avg-PA}, we consider the cases with two agents and any number of items.
By taking a suitable \emph{average} of some PA mechanisms of Cole et al.~\cite{ColeGG2013-EC,ColeGG2013-AAMAS},
we design a SP mechanism which is at least $0.67776$-competitive.

In Section \ref{sect:DIP}, we propose a new sub-class of SP mechanisms for any number of agents and items,
called \emph{Dynamic-Increasing-Price} (DIP) mechanisms.
DIP mechanism is similar to IP mechanism in the sense that both introduce \emph{virtual money} and \emph{virtual prices}.
However, there is no dictator-swapping process in DIP, and DIP is not swap-dictatorial in general.
Also, the prices in an IP mechanism is independent of agents' preferences, but in a DIP mechanism,
for each agent, the prices of the items depend on other agents' preferences.
In other words, a DIP mechanism enforces all agents to be \emph{complete price takers}.
We note that DIP is well motivated by the classical context of markets: when the scale of the market is large,
each agent in the market has tiny effect on the prices,
so the prices she face are \emph{almost} completely depending on other agents preferences.
We show that the $\frac 56$-competitive mechanism is a DIP mechanism.

\section{Preliminaries}

\paragraph{Problem Setting.}
We study the problem of allocating $m\geq 2$ divisible items, each of one unit, among two agents, referred to as agents 1 and 2.
A vector $(c_1,\cdots,c_m)$ is \emph{normalized} if each $c_j\geq 0$ and $\sumjm c_j = 1$.
Each agent $i$ has a normalized\footnote{See~\cite[Section 2]{GuoC2010} for an explanation on why the utility functions are normalized.}
linear utility function $u_i(\vec{x}_i) = \sumjm \uij \xij$, where $\sumjm \uij = 1$;
her utility function is identified to the normalized utility vector $\vecu_i :=(u_{i1},\cdots,u_{im})$.
Each agent $i$ reports to mechanism $A$ a normalized bid vector $\vecb_i = (b_{i1},\cdots,b_{im})$.
The mechanism $A$, based on the bids, allocates $A_{ij} (\vecb_1,\vecb_2)$ unit of item $j$ to agent $i$.
The allocation must be \emph{feasible}, i.e., for any $i,j$, $A_{ij} (\vecb_1,\vecb_2)\geq 0$ and
for any $j$, $A_{1j} (\vecb_1,\vecb_2) + A_{2j} (\vecb_1,\vecb_2) \leq 1$.

Let
$$u_i^A(\vecb_1,\vecb_2) ~:=~ \sumjm \uij\cdot A_{ij}(\vecb_1,\vecb_2),$$
which is the utility attained by agent $i$ when agent 1 bids $\vecb_1$ and agent 2 bids $\vecb_2$.
Mechanism $A$ is \emph{strategyproof} (SP) if for any agent $i$ and for any normalized vectors $\vecb_i,\vecb_{3-i}$,
$$u_i^A(\vecu_i,\vecb_{3-i}) ~\geq~ u_i^A(\vecb_i,\vecb_{3-i}),$$
i.e., agent $i$ is always better off to bid her true utility vector.
Strategyproofness is generally accepted as a favourable feature of a mechanism,
since it discourages agents from having strategic consideration for reporting bids.

Let
$$\swa(\vecu_1,\vecu_2) ~:=~ u_1^A(\vecu_1,\vecu_2) + u_2^A(\vecu_1,\vecu_2),$$
which is the social welfare when both agents bid truthfully to mechanism $A$.
Let
$$\swopt ~:=~ \sumjm ~\max\{u_{1j},u_{2j}\},$$
which is the maximum possible social welfare among all feasible allocations;
the feasible allocation that attains the maximum possible social welfare is called the \emph{first-best allocation}.
A SP mechanism $A$ is $\alpha$-competitive if for any $\vecu_1,\vecu_2$,
$\swa(\vecu_1,\vecu_2) \geq \alpha\cdot \swopt$.
Our target is to design SP mechanisms with high competitive ratios.

\paragraph{Useful Definitions, Facts and Tool.}
%
It is known that every $\alpha$-competitive SP mechanism has a corresponding
symmetric-over-agents and symmetric-over-items $\alpha$-competitive SP mechanism \cite[Claim 1]{GuoC2010}.
Thus, from now on we focus only on such symmetric SP mechanisms.

When the agents' utility functions are linear, any \emph{weighted average} over SP mechanisms is also SP:
if $A^1,\cdots,A^k$ are SP mechanisms, then $\bar{A}$, defined by the allocation rule
$$\bar{A}_{ij}(\vecb_1,\vecb_2) ~:=~ \sum_{\ell=1}^k \beta_\ell \cdot A^\ell_{ij}(\vecb_1,\vecb_2),$$
where the $\beta_\ell$'s are positive and $\sum_{\ell=1}^k \beta_\ell = 1$, is also SP.
We will write $\bar{A}$ as $\sum_{\ell=1}^k \beta_\ell \cdot A^\ell$.

For any utility functions of the agents $u_1,u_2$, let their \emph{attainable utility region} (AUR) be
$$\aur(u_1,u_2) ~:=~\left\{(r_1,r_2)\left|~\exists\text{ a feasible allocation }(\vecx_1,\vecx_2)\text{~s.t.~}
u_1(\vecx_1) = r_1\text{ and }u_2(\vecx_2) = r_2\right.\right\}.$$

\begin{prop}\label{prop:aur-convex}
If $u_1,u_2$ are increasing, concave and continuous functions, then \emph{$\aur(u_1,u_2)$} is a convex subset of $\rr^2$.
\end{prop}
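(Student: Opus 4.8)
The plan is to prove convexity directly from the definition, by showing that any convex combination of two attainable utility pairs is again attainable. First I would record the structural fact underlying everything: the set of feasible allocations $(\vecx_1,\vecx_2)\in\rr^{2m}$ --- those with $x_{ij}\ge 0$ for all $i,j$ and $x_{1j}+x_{2j}\le 1$ for all $j$ --- is convex, since it is an intersection of half-spaces. This is the only place the problem's feasibility constraints enter.

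Next, I would take $(r_1,r_2)$ attained by a feasible allocation $(\vecx_1,\vecx_2)$ and $(r_1',r_2')$ attained by $(\vecx_1',\vecx_2')$, fix $\lambda\in[0,1]$, and set $\bar r_i := \lambda r_i + (1-\lambda) r_i'$. Forming the averaged allocation $(\vec y_1, \vec y_2) := \lambda(\vecx_1,\vecx_2)+(1-\lambda)(\vecx_1',\vecx_2')$, which is feasible by the convexity just noted, the concavity of each $u_i$ gives $u_i(\vec y_i)\ge \lambda\, u_i(\vecx_i)+(1-\lambda)\,u_i(\vecx_i') = \bar r_i$. So the averaged allocation already realizes a utility pair that coordinatewise dominates the target $(\bar r_1,\bar r_2)$, but in general overshoots it.

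The crux, which I expect to be the main obstacle, is to dial this overshooting pair back down to exactly $(\bar r_1,\bar r_2)$ without leaving the feasible region; concavity alone yields only the inequality, so this is precisely where the monotonicity and continuity hypotheses earn their place. Because each feasible $\vecx_i$ is nonnegative and $u_i$ is increasing, $u_i(\vec 0)\le r_i, r_i'$ and hence $u_i(\vec 0)\le \bar r_i\le u_i(\vec y_i)$. The map $t\mapsto u_i(t\,\vec y_i)$ is continuous on $[0,1]$ and runs from $u_i(\vec 0)$ to $u_i(\vec y_i)$, so by the intermediate value theorem there is $t_i\in[0,1]$ with $u_i(t_i\,\vec y_i)=\bar r_i$. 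Scaling each agent down independently by $t_i\le 1$ preserves feasibility, since $t_1 y_{1j}+t_2 y_{2j}\le y_{1j}+y_{2j}\le 1$, so $(t_1\vec y_1,\, t_2\vec y_2)$ is a feasible allocation attaining exactly $(\bar r_1,\bar r_2)$, placing it in $\aur(u_1,u_2)$. As $\lambda$ was arbitrary, this establishes convexity. The only edge cases --- where $\bar r_i$ coincides with $u_i(\vec 0)$ or $u_i(\vec y_i)$ --- are absorbed automatically by working on the closed interval in the intermediate value theorem, so no separate treatment is needed.
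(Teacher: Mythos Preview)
Your proof is correct and follows essentially the same approach as the paper: form the convex combination of the two feasible allocations, use concavity to see that this overshoots the target utility pair, and then use monotonicity and continuity (via the intermediate value theorem along the ray from the origin) to scale each agent's bundle down independently to hit the target exactly. Your write-up is slightly more explicit than the paper's in justifying $u_i(\vec 0)\le \bar r_i$ and in verifying feasibility of the scaled allocation, but the argument is the same.
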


\begin{proof}
Suppose $(r_1,r_2),(r'_1,r'_2)\in\aur(u_1,u_2)$ and $\beta$ is any real number in $[0,1]$.
It suffices to prove that $(\beta r_1 + (1-\beta) r'_1, \beta r_2 + (1-\beta) r'_2)\in \aur(u_1,u_2)$.

Let $(\vecx_1,\vecx_2)$ be a feasible allocation such that for $i=1,2$, $u_i(\vecx_i) = r_i$.
Let $(\vecx_1',\vecx_2')$ be a feasible allocation such that for $i=1,2$, $u_i(\vecx_i') = r_i'$.
Note that $(\beta\cdot\vecx_1 + (1-\beta)\cdot\vecx_1',\beta\cdot\vecx_2 + (1-\beta)\cdot\vecx_2')$ is also a feasible allocation.
Since $u_i$ is a concave function,
$$u_i(\beta\cdot\vecx_i + (1-\beta)\cdot\vecx_i') ~\geq~ \beta \cdot u_i(\vecx_i) + (1-\beta)\cdot u_i(\vecx_i') ~=~ \beta r_i + (1-\beta) r_i'.$$
Then, since $u_i$ is increasing and continuous, there exists $\gamma_i\in [0,1]$ such that
$$u_i(\beta\gamma_i\cdot\vecx_i + (1-\beta)\gamma_i\cdot\vecx_i') ~=~ \beta r_i + (1-\beta) r_i',$$
i.e., $(\beta\gamma_1\cdot\vecx_1 + (1-\beta)\gamma_1\cdot\vecx_1',\beta\gamma_2\cdot\vecx_2 + (1-\beta)\gamma_2\cdot\vecx_2')$ is a feasible allocation
that verifies $(\beta r_1 + (1-\beta) r'_1, \beta r_2 + (1-\beta) r'_2)\in \aur(u_1,u_2)$.
\end{proof}

\section{A Set of Sufficient Conditions for Strategyproofness}\label{sect:suff-condition}

In Sections \ref{sect:suff-condition}---\ref{sect:human-upper-bound}, we focus on the case with two items.
In this case, each normalized utility vector has the form $(t,1-t)$, which is essentially single-parameter.
We assume that each agent $i$ bids a number $b_i\in [0,1]$,
which is supposed to be the first entry of her normalized utility vector.
A symmetric mechanism $A$ can be described by a single function $A:[0,1]^2\ra\rr^+$, such that
\begin{align*}
A_{11}(b_1,b_2) \equiv A(b_1,b_2) ~~&~~A_{12}(b_1,b_2) \equiv A(1-b_1,1-b_2)\\
A_{21}(b_1,b_2) \equiv A(b_2,b_1) ~~&~~A_{22}(b_1,b_2) \equiv A(1-b_2,1-b_1)
\end{align*}

\medskip

In this section, we first prove a characterization of SP symmetric mechanisms (Theorem \ref{thm:Rochet}),
which follows almost directly from a characterization result of Rochet~\cite[Theorem 1]{Rochet1985}; we will provide a self-contained proof.
Then we use the characterization to provide a set of sufficient conditions for strategyproofness (Theorem \ref{thm:SP-char}),
which will be used in the next three sections.

Let
$$\ucap^A(b_1,b_2) := b_1 \cdot A(b_1,b_2) + (1-b_1) \cdot A(1-b_1,1-b_2),$$
which is the utility attained by agent 1 if her true utility vector is $(b_1,1-b_1)$, she bids truthfully, and agent 2 bids $b_2$.

\begin{theorem}[{\cite[Theorem 1]{Rochet1985}}] \label{thm:Rochet}
Let $A$ be a symmetric mechanism for two items. $A$ is SP if and only if
\begin{enumerate}
\item[(a)] for any fixed $b_2\in [0,1]$, $\ucap^A(b_1,b_2)$ is a convex function of $b_1$, and
\item[(b)] for any fixed $b_2\in [0,1]$, $z := A(t_1,b_2) - A(1-t_1,1-b_2)$ is a sub-gradient of $\ucap^A(b_1,b_2)$ at $b_1 = t_1$,
i.e., for any $b_1\in [0,1]$,
$$\ucap^A(b_1,b_2) ~\geq~ \ucap^A(t_1,b_2) + z \cdot (b_1 - t_1).$$
\end{enumerate}
\end{theorem}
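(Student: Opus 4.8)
The plan is to reduce strategyproofness to a single transparent algebraic identity, and then invoke one standard fact from convex analysis. Since $A$ is symmetric over agents, agent $2$'s truthfulness constraint against bid $b_1$ is the same condition as agent $1$'s against a fixed bid, so it suffices to verify the inequality for agent $1$ with an arbitrary fixed $b_2$. Writing $t_1$ for agent $1$'s true type and $b_1$ for her report, her realized utility is
$$\phi(t_1;b_1,b_2) ~:=~ t_1\cdot A(b_1,b_2) + (1-t_1)\cdot A(1-b_1,1-b_2),$$
and by construction $\ucap^A(b_1,b_2) = \phi(b_1;b_1,b_2)$ is exactly the truthful utility. The first step I would carry out is to record the identity
$$\phi(t_1;b_1,b_2) ~=~ \ucap^A(b_1,b_2) + (t_1-b_1)\left[A(b_1,b_2) - A(1-b_1,1-b_2)\right],$$
obtained by collecting the coefficients of $A(b_1,b_2)$ and of $A(1-b_1,1-b_2)$; the bracketed factor is precisely the quantity $z$ of part (b), evaluated at the anchor $b_1$.

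With this identity, strategyproofness is immediate to translate. The requirement $\phi(t_1;t_1,b_2)\geq \phi(t_1;b_1,b_2)$ for all $t_1,b_1$, upon substituting $\phi(t_1;t_1,b_2)=\ucap^A(t_1,b_2)$, becomes
$$\ucap^A(t_1,b_2) ~\geq~ \ucap^A(b_1,b_2) + (t_1-b_1)\left[A(b_1,b_2) - A(1-b_1,1-b_2)\right] \quad\text{for all } b_1,t_1.$$
Relabeling so that $b_1$ plays the role of the base point and $t_1$ the role of the variable, this is literally the subgradient inequality of part (b), with $z = A(b_1,b_2)-A(1-b_1,1-b_2)$ attached at the base point. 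Thus strategyproofness is \emph{equivalent} to condition (b) with no case analysis and no slack; this equivalence is the heart of the argument, and the only thing needing care is keeping the relabeling of anchor and variable straight.

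It remains to connect (b) with (a). For the forward direction I would invoke the elementary fact that a function admitting a subgradient at every point of $[0,1]$ is convex: fixing $b_1<b_1'$ and $\lambda\in[0,1]$, applying the inequality of (b) anchored at $\lambda b_1+(1-\lambda)b_1'$ to each of $b_1$ and $b_1'$ and taking the $\lambda$-weighted combination cancels the linear terms and yields convexity of $\ucap^A(\cdot,b_2)$, which is (a). Hence (b) already implies (a), so condition (a) is a consequence rather than a genuinely separate hypothesis, and the listed pair of conditions is equivalent to (b) alone. Chaining the pieces gives SP $\iff$ (b) $\iff$ (a) and (b), which proves both directions. I expect no real obstacle here: the identity does all the work, and the main places to be meticulous are the agent-symmetry reduction at the outset and the index bookkeeping when matching my derived inequality to the exact form of (b).
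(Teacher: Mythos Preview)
Your argument is correct and follows essentially the same route as the paper: both proofs rewrite the SP inequality via the identity $\phi(t_1;b_1,b_2)=\ucap^A(b_1,b_2)+(t_1-b_1)\bigl[A(b_1,b_2)-A(1-b_1,1-b_2)\bigr]$ to see that SP is literally condition~(b), and then obtain~(a) as a free consequence. The only cosmetic difference is that the paper derives~(a) from SP by writing $\ucap^A$ as a supremum of affine functions, whereas you derive~(a) from~(b) via the equivalent fact that a function with a subgradient everywhere is convex.
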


We note that since we are considering symmetric mechanisms, in the above theorem,
stating the conditions (a) and (b) w.r.t.~agent $1$ only is without loss of generality.

\begin{proof}
If $A$ is SP, then $\ucap^A(b_1,b_2) = \sup_{b_1'} b_1 \cdot A(b_1',b_2) + (1-b_1) \cdot A(1-b_1',1-b_2)$,
which is a supremum of linear functions of $b_1$, thus $\ucap^A(b_1,b_2)$ is convex w.r.t.~$b_1$, i.e., condition (a) holds.

Next, we show that condition (b) is equivalent to strategyproofness as below; the second, third and the fourth statements below are equivalent
since the R.H.S.~of the inequalities in them are indeed identical.
\begin{align*}
 ~&~A\text{ is SP}\\
\Leftrightarrow~&~\forall b_1,b_2,t_1,~\ucap^A(b_1,b_2) ~\geq~ b_1\cdot A(t_1,b_2) + (1-b_1)\cdot A(1-t_1,1-b_2)\\
\Leftrightarrow~&~\forall b_1,b_2,t_1,~\ucap^A(b_1,b_2) ~\geq~ t_1\cdot A(t_1,b_2) + (1-t_1)\cdot A(1-t_1,1-b_2) + z\cdot (b_1 - t_1)\\
\Leftrightarrow~&~\forall b_1,b_2,t_1,~\ucap^A(b_1,b_2) ~\geq~ \ucap^A(t_1,b_2) + z\cdot (b_1 - t_1)\\
\Leftrightarrow~&~\text{condition (b) holds.}
\end{align*}
\end{proof}

\begin{theorem}\label{thm:SP-char}
Let $A$ be a symmetric mechanism, described by function $A(b_1,b_2)$.
If for any fixed $b_2$, $A(b_1,b_2)$ is increasing, continuous and piecewise continuously differentiable w.r.t.~$b_1$,
and if for any $t_1,t_2\in [0,1]$, the equality\footnote{A clarification on the perhaps misleading notation:
at any specific point $(y_1,y_2)$, $\frac{\partial A}{\partial b_1} (y_1,y_2)$ is
the value of the partial derivative of $A$ w.r.t.~its first parameter at that point.
To be crystal clear, $\frac{\partial A}{\partial b_1} (y_1,y_2) = \lim_{\delta\rightarrow 0} \frac{A(y_1+\delta,y_2) - A(y_1,y_2)}{\delta}$.}
\begin{equation}\label{eq:SP-char}
t_1 \cdot \frac{\partial A}{\partial b_1} (t_1,t_2) = (1-t_1) \cdot \frac{\partial A}{\partial b_1} (1-t_1,1-t_2)
\end{equation}
holds within each piecewise interval, then $A$ is SP.
\end{theorem}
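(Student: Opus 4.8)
The plan is to verify the two conditions of Rochet's characterization (Theorem~\ref{thm:Rochet}) directly, by computing the partial derivative of $\ucap^A(b_1,b_2)$ with respect to $b_1$ for a fixed $b_2$ and exploiting the hypothesis~(\ref{eq:SP-char}) to simplify it. The whole argument rests on a single clean cancellation: once the two terms carrying $\frac{\partial A}{\partial b_1}$ cancel, the remaining expression for the derivative turns out to be a globally continuous and monotone function of $b_1$, which immediately yields both convexity (condition (a)) and the subgradient inequality (condition (b)).

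First I would fix $b_2$ and differentiate $\ucap^A(b_1,b_2) = b_1 A(b_1,b_2) + (1-b_1) A(1-b_1,1-b_2)$ with respect to $b_1$ on the interior of any piecewise-differentiable interval. Using the product and chain rules (noting $b_2$ is held fixed, so $\frac{d}{db_1} A(1-b_1,1-b_2) = -\frac{\partial A}{\partial b_1}(1-b_1,1-b_2)$), I would obtain
\[
\frac{\partial \ucap^A}{\partial b_1} = \Big[ A(b_1,b_2) - A(1-b_1,1-b_2) \Big] + \Big[ b_1\,\tfrac{\partial A}{\partial b_1}(b_1,b_2) - (1-b_1)\,\tfrac{\partial A}{\partial b_1}(1-b_1,1-b_2) \Big].
\]
Applying the hypothesis~(\ref{eq:SP-char}) with $t_1 = b_1$ and $t_2 = b_2$ makes the second bracket vanish, leaving $\frac{\partial \ucap^A}{\partial b_1} = A(b_1,b_2) - A(1-b_1,1-b_2) =: g(b_1)$. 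Since $A(\cdot,b_2)$ is increasing in its first argument, $A(b_1,b_2)$ is increasing in $b_1$ while $A(1-b_1,1-b_2)$ is decreasing in $b_1$; hence $g$ is non-decreasing, and since $A$ is continuous, $g$ is continuous on all of $[0,1]$.

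The point I would then emphasize --- and the place where the piecewise hypothesis must be handled with care --- is that although $A(\cdot,b_2)$ may have kinks, the derivative formula $g(b_1)$ is continuous and monotone across the whole interval, so the kinks disappear at the level of $\ucap^A$. Concretely, because $\ucap^A(\cdot,b_2)$ is continuous and has derivative $g$ on the interior of each piece, the fundamental theorem of calculus applied piece-by-piece (together with continuity at the finitely many breakpoints) gives $\ucap^A(b_1,b_2) = \ucap^A(0,b_2) + \int_0^{b_1} g(s)\,ds$; as $g$ is non-decreasing this is convex in $b_1$, establishing condition (a), and in fact shows $\ucap^A(\cdot,b_2)$ is $C^1$ with derivative $g$. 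Condition (b) then follows for free: for a convex, differentiable function the derivative at a point is a subgradient there, and $g(t_1) = A(t_1,b_2) - A(1-t_1,1-b_2)$ is exactly the quantity $z$ named in Theorem~\ref{thm:Rochet}(b). Having verified (a) and (b), I would conclude by Rochet's characterization that $A$ is SP. The main obstacle is purely the careful treatment of the piecewise differentiability; the resolution is the observation above that the cancellation produces a derivative which is continuous and monotone globally, so no additional kink-by-kink convexity check is needed.
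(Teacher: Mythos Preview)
Your proposal is correct and follows essentially the same approach as the paper: compute $\partial \ucap^A/\partial b_1$, use hypothesis~\eqref{eq:SP-char} to cancel the derivative terms, observe that the resulting expression $A(b_1,b_2)-A(1-b_1,1-b_2)$ is continuous and non-decreasing, and then verify conditions (a) and (b) of Theorem~\ref{thm:Rochet}. Your explicit use of the fundamental theorem of calculus piece-by-piece to handle the breakpoints is slightly more careful than the paper's one-line remark that the left and right derivatives of $\ucap^A$ agree at the endpoints, but the substance is identical.
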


\begin{proof}
Recall that
$$\ucap^A(b_1,b_2) ~=~ b_1\cdot A(b_1,b_2) + (1-b_1)\cdot A(1-b_1,1-b_2).$$
By the assumptions on $A$, $\frac{\partial \ucap^A}{\partial b_1}(t_1,t_2)$ exists everywhere (except perhaps at the endpoints of the piecewise intervals),
and its value is
$$A(t_1,t_2) ~-~ A(1-t_1,1-t_2) ~+~ t_1 \cdot \frac{\partial A}{\partial b_1} (t_1,t_2) ~-~ (1-t_1)\cdot \frac{\partial A}{\partial b_1} (1-t_1,1-t_2).$$
The final two terms cancel out due to \eqref{eq:SP-char}. Hence,
\begin{equation}\label{eq:SP-char-Rochet}
\frac{\partial \ucap^A}{\partial b_1}(t_1,t_2) ~=~ A(t_1,t_2) ~-~ A(1-t_1,1-t_2).
\end{equation}
Since $A$ is continuous and piecewise continuously differentiable w.r.t.~its first parameter,
at any endpoint $(t_1,t_2)$ of a piecewise interval,
the left and right partial derivatives of $\ucap^A$ w.r.t.~its first parameter are equal,
i.e., $\frac{\partial \ucap^A}{\partial b_1}(t_1,t_2)$ exists at the endpoint too.

When $t_1$ increases, $A(t_1,t_2)$ increases but $A(1-t_1,1-t_2)$ decreases.
By \eqref{eq:SP-char-Rochet}, $\frac{\partial \ucap^A}{\partial b_1}(t_1,t_2)$ increases with $t_1$ within each piecewise interval.
Thus, $\ucap^A(b_1,b_2)$ is convex w.r.t.~$b_1$, i.e., condition (a) in Theorem \ref{thm:Rochet} holds.

\eqref{eq:SP-char-Rochet} and condition (a) imply that $A(t_1,t_2) - A(1-t_1,1-t_2)$ is a subgradient of $\ucap^A(b_1,t_2)$ at $b_1 = t_1$,
i.e., condition (b) in Theorem \ref{thm:Rochet} holds. Then by Theorem \ref{thm:Rochet}, $A$ is SP.
\end{proof}

Note that the allocation functions of Partial Allocation mechanisms in~\cite{ColeGG2013-EC,ColeGG2013-AAMAS} are discontinuous,
so Theorem \ref{thm:SP-char} is not applicable.

\section{A $\frac 56$-Competitive Full Mechanism for Two Items}\label{sect:fully-two-items}

Guo and Conitzer~\cite{GuoC2010} introduced the linear program (LP) below,
which represents the optimal full mechanism when the bids are restricted to be multiples of $1/N$ for some integer $N$.
Let $[N]$ denote the set of all multiples of $1/N$ which are between zero and one.
\begin{align}
&~~~~~~\max \lambda\nonumber\\
&\forall t_1,t_1',t_2\in [N],&&\ucap^A(t_1,t_2) \geq t_1 \cdot A(t_1',t_2) + (1-t_1)\cdot A(1-t_1',1-t_2); & \text{(stragyproofness)}\nonumber\\
&\forall t_1,t_2\in [N],&& \swa(t_1,t_2)\geq (1+|t_1-t_2|)\lambda; & \text{(competitiveness)}\nonumber\\
&\forall t_1,t_2\in [N],&& A(t_1,t_2) + A(t_2,t_1) = 1; & \label{eq:fully}\\
&\forall t_1,t_2\in [N],&& A(t_1,t_2)\geq 0. & \nonumber
\end{align}
They solved the LP with $N=50$. The optimal $\lambda$ value, which is $0.841$,
is an upper bound on the optimal competitiveness of SP full mechanisms for two items.

The LP has $\Theta(N^2)$ variables and $\Theta(N^3)$ constraints, which is efficiently solvable only for small $N$.
However, one would expect that the strategyproofness constraints with large $|t_1-t_1'|$ are unnecessary.
So we keep only those constraints with $|t_1-t_1'| = 1/N$.
This reduces the number of constraints to $\Theta(N^2)$,
allowing us to solve the LP with a much refined resolution of $N=400$.
We obtain an improved upper bound of $\frac 56 + \epsilon$, where $\epsilon < 10^{-9}$.
We believe that $\frac 56$ is the final answer.

We make two observations from the solution to the LP, and make two heuristic assumptions.
We then use the observations and assumptions to \emph{derive} a full mechanism which is $\frac 56$-competitive.

\medskip

\noindent\textbf{Observation 1. }There exists a function $f:[0,1]\ra\rr$ such that $A(t_1,t_2) = f(t_1) - f(t_2) + \frac 12$.
Furthermore, $f$ is increasing, continuous and piecewise differentiable.

\medskip

\noindent\textbf{Observation 2. }For all $t\in \left[0,\frac 15\right]$, $f(t) = 0$. For all $t\in\left[\frac 45,1\right]$, $f(t)$ is a constant.

\medskip

\noindent\textbf{Assumption 3. }The function $A(t_1,t_2)$ satisfies the equality \eqref{eq:SP-char},
except at points where $t_1\in \left\{1/5,4/5\right\}$.

\medskip

With Observation 1, Observation 2 and Assumption 3, \eqref{eq:SP-char} yields
\begin{equation}\label{eq:SP-char-f}
\forall t\in [0,1]\setminus\left\{1/5,4/5\right\},~~t f'(t) = (1-t) f'(1-t).
\end{equation}

%

With Observation 2, the social welfare attained when $t_2=0$ is $1+t_1 \left(f(t_1) - f(1-t_1) + \frac 12\right)$,
while $\text{\textsf{SW}}_{\text{\textsf{OPT}}}(t_1,t_2) = 1+t_1$.
For the mechanism to be $\frac 56$-competitive, the following inequality must hold:
$1+t_1\left(f(t_1) - f(1-t_1) + \frac 12\right) \geq \frac 56 (1+t_1)$, or equivalently
\begin{equation}\label{eq:f-diff}
f(t) - f(1-t) \geq \frac 13 - \frac{1}{6t}.
\end{equation}
Observe that by \eqref{eq:SP-char-f}, once the values of $f(t)$ for $0\leq t\leq \frac 12$ are known,
the values of $f(t)$ for $\frac 12 \leq t\leq 1$ can be determined.
We now state the final heuristic assumption: in \eqref{eq:f-diff}, the equality holds for $t\in \left[\frac 15,\frac 12\right]$.

\medskip

\noindent\textbf{Assumption 4. }$\forall t\in\left[\frac 15, \frac 12\right]$, $f(t) - f(1-t) = \frac 13 - \frac{1}{6t}$.

\medskip

With \eqref{eq:f-diff} and Assumption 4, we can solve $f$ using calculus, which is:
\begin{equation}\label{eq:f-explicit}
f(t) = \begin{cases}
0, & t\in\left[0,\frac 15\right];\\
\frac 56 - \frac{1}{6t} - \frac 16 \ln(5t), & t\in\left[\frac 15,\frac 12\right];\\
\frac 12 - \frac 16 \ln(5-5t), & t\in\left[\frac 12,\frac 45\right];\\
\frac 12, & t\in \left[\frac 45,1\right].
\end{cases}
\end{equation}

\begin{theorem}
The full mechanism $A$ as described in Assumption 1 and \eqref{eq:f-explicit} is feasible, SP and $\frac 56$-competitive.
\end{theorem}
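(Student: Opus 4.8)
The plan is to verify the three properties in turn, with the bulk of the work in competitiveness. Throughout I write $A(t_1,t_2)=f(t_1)-f(t_2)+\tfrac12$ and set $h(t):=f(t)-f(1-t)$.

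\textbf{Feasibility.} First I would record the basic properties of $f$ from \eqref{eq:f-explicit}: it is continuous (the values of adjacent pieces agree at the breakpoints $\tfrac15,\tfrac12,\tfrac45$), increasing (since $f'(t)=\tfrac{1-t}{6t^2}>0$ on $(\tfrac15,\tfrac12)$, $f'(t)=\tfrac{1}{6(1-t)}>0$ on $(\tfrac12,\tfrac45)$, and $f'\equiv 0$ on the two flat end pieces), and takes values in $[0,\tfrac12]$. Feasibility then splits into two parts. Since $A(t_1,t_2)+A(t_2,t_1)=\bigl[f(t_1)-f(t_2)+\tfrac12\bigr]+\bigl[f(t_2)-f(t_1)+\tfrac12\bigr]=1$, each item is completely allocated, so $A$ is indeed full and the capacity constraints hold with equality; this also shows $A$ satisfies \eqref{eq:fully}. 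Non-negativity $A(t_1,t_2)=f(t_1)-f(t_2)+\tfrac12\ge 0$ follows from $f(t_2)-f(t_1)\le \tfrac12-0=\tfrac12$, using the range of $f$.

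\textbf{Strategyproofness.} I would invoke Theorem \ref{thm:SP-char}. For fixed $b_2$, the function $A(b_1,b_2)=f(b_1)-f(b_2)+\tfrac12$ inherits from $f$ the properties of being increasing, continuous, and piecewise continuously differentiable in $b_1$, with breakpoints only at $b_1\in\{\tfrac15,\tfrac45\}$. Because $\frac{\partial A}{\partial b_1}(t_1,t_2)=f'(t_1)$, the identity \eqref{eq:SP-char} is exactly \eqref{eq:SP-char-f}, i.e.\ $t f'(t)=(1-t)f'(1-t)$, and I would check it on each of $(0,\tfrac15),(\tfrac15,\tfrac12),(\tfrac12,\tfrac45),(\tfrac45,1)$: on the two end intervals both sides vanish, while on $(\tfrac15,\tfrac12)$ we have $t f'(t)=\tfrac{1-t}{6t}$ and, since $1-t\in(\tfrac12,\tfrac45)$, $(1-t)f'(1-t)=(1-t)\cdot\tfrac{1}{6t}=\tfrac{1-t}{6t}$, which agree; the interval $(\tfrac12,\tfrac45)$ is symmetric. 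Theorem \ref{thm:SP-char} then gives strategyproofness.

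\textbf{Competitiveness.} This is the main part. The key first step is the algebraic simplification
$$\swa(t_1,t_2)=\ucap^A(t_1,t_2)+\ucap^A(t_2,t_1)=1+(t_1-t_2)\bigl(h(t_1)-h(t_2)\bigr),$$
obtained by substituting the form of $A$ and collecting terms. Together with $\swoptnull(t_1,t_2)=1+|t_1-t_2|$ and the symmetry allowing us to assume $t_1\ge t_2$, the bound $\swa\ge\tfrac56\,\swoptnull$ is equivalent to $(t_1-t_2)\bigl(h(t_1)-h(t_2)-\tfrac56\bigr)\ge-\tfrac16$. For the easy regime $t_1-t_2\le\tfrac15$ this is immediate: $h$ is increasing so $(t_1-t_2)(h(t_1)-h(t_2))\ge 0$, while $-\tfrac56(t_1-t_2)\ge-\tfrac56\cdot\tfrac15=-\tfrac16$. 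For $t_1-t_2>\tfrac15$ I would fix the gap $d=t_1-t_2$ and minimise $h(a+d)-h(a)$ over the placement $a=t_2\in[0,1-d]$; using the reflection symmetry $h(1-t)=-h(t)$ this function is symmetric about $a=\tfrac{1-d}{2}$, and analysing $h'$ (which is $0$ off $[\tfrac15,\tfrac45]$ and U-shaped on it) shows the minimiser is the endpoint $a=0$. This reduces everything to the one-variable inequality $h(d)\ge\tfrac13-\tfrac{1}{6d}$, which holds with \emph{equality} on $[\tfrac15,\tfrac12]$ by Assumption 4, reduces on $[\tfrac12,\tfrac45]$ to $d(1-d)\le\tfrac14$, and is trivial on $[\tfrac45,1]$ where $h\equiv\tfrac12$. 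The main obstacle is precisely this reduction to the boundary $a=0$: one must rule out the competing interior local minimum at the symmetric point $a=\tfrac{1-d}{2}$, which, because $h'$ is neither convex nor monotone, requires a careful though elementary comparison; alternatively one can bypass it by finite casework over the (at most ten) cells determined by the pieces in which $t_1$ and $t_2$ lie, verifying on each cell that the smooth rational function $(t_1-t_2)(h(t_1)-h(t_2)-\tfrac56)$ stays $\ge-\tfrac16$.
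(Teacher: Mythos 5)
Your proposal is correct and follows essentially the same route as the paper: direct verification of feasibility, strategyproofness via Theorem \ref{thm:SP-char} (checking \eqref{eq:SP-char-f} piece by piece), and the identity $\swa(t_1,t_2)=1+(t_1-t_2)\bigl(h(t_1)-h(t_2)\bigr)$ followed by a case analysis --- the paper simply asserts that this case analysis plus simple calculus finishes the job and skips the details that you lay out. The one step you flag as the main obstacle does go through: for gap $d=t_1-t_2\in\left(\tfrac15,\tfrac12\right]$, comparing the endpoint placement $t_2=0$ against the interior critical placement $t_2=\tfrac{1-d}{2}$ reduces, after clearing denominators, to $(3d-1)^2\ge 0$, and the remaining ranges of $d$ are immediate since $h(t_1)-h(t_2)\le 1$ always.
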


\begin{proof}
Feasibility trivially holds. Strategyproofness follows from Observation 1, Assumption 3 and Theorem \ref{thm:SP-char}.

For competitiveness, note that
$$\swa(t_1,t_2) ~=~ 1 + (t_1 - t_2) \left[ f(t_1) - f(t_2) - f(1-t_1) + f(1-t_2) \right].$$
Showing that $A$ is $\frac 56$-competitive is equivalent to showing that $\swa(t_1,t_2) \geq \frac 56 (1 + |t_1-t_2|)$,
which can be done with an appropriate case analysis (which is needed due to the piecewise definition of $f$) and simple calculus;
we skip the details.
\end{proof}

\section{A Partial Mechanism for Two Items -- Strictly Better than $\frac 56$-Competitive}\label{sect:partial-two-items}

By changing the equality sign in \eqref{eq:fully} to a $\leq$ sign, Guo and Conitzer's LP covers partial mechanisms also.
The modified LP provides an upper bound of $0.8644$.
We look into its solution, as we did in Section \ref{sect:fully-two-items}, but we do not recognize a nice pattern.

Since we solve the modified LP with high resolution, we \emph{believe} that an optimal partial mechanism attains competitive ratio close to $0.8644$,
which beats the $\frac 56+\epsilon$ upper bound for full mechanism.
Yet, to formally \emph{prove} that an optimal partial mechanism is strictly better than an optimal full mechanism,
we ought to provide a concrete, and preferably compact, SP partial mechanism which is strictly better than $(\frac 56+\epsilon)$-competitive.
This is the purpose of the current section.

Let $f_1 : \left[0,1/2\right]\ra \rr$ and $f_2: \left[1/2, 1\right] \ra \rr$ be two increasing and continuously differentiable functions
such that for all $t\in \left[0,1/2\right]$, $t f_1'(t) = (1-t) f_2'(1-t)$, and $f_1(0)=f_2\left(1/2\right) = 0$.
Also, let $Q,R: [0,1]\ra\rr^+$ be two functions.
Then define the function
$$A(t_1,t_2) ~:=~
\begin{cases}
Q(t_2)\cdot f_1(t_1) + R(t_2),&t_1\in\left[0,1/2\right];\\
A\left(\frac 12,t_2\right) + Q(1-t_2)\cdot f_2(t_1),&t_1\in\left(1/2,1\right].
\end{cases}$$
It is easy to verify that the above function $A$ satisfies all conditions required in Theorem \ref{thm:SP-char},
and thus it yields a SP mechanism, modulo feasibility constraint.

Our strategy is to pick some choice of $f_1,f_2$, and then use an LP to find out $Q,R$,
such that $A$ is feasible and attains a good competitiveness.
As before, we formulate the LP with bids restricted to be multiples of $1/N$ for some integer $N$.
The LP is stated below; note that we impose a slightly stricter feasibility constraint,
in which only $(1-\delta)$ fraction, for some $\delta>0$, of each item can be allocated.
The reason will be clear later.
\begin{align*}
&~~~~~~\max \lambda\\
&\forall t_1,t_2\in [N],&&~ A(t_1,t_2) + A(t_2,t_1)\leq 1-\delta;\\
&\forall t_1,t_2\in [N],&&~ \swa(t_1,t_2)\geq (1+|t_1-t_2|)\lambda.\\
&\forall t\in [N],&&~ Q(t),R(t)\geq 0.
\end{align*}
It is easy to verify that the above program is an LP with variables $Q(t),R(t)$ for $t\in [N]$, plus an extra variable $\lambda$.

Lacking further insight on how a good choice of $f_1,f_2$ should be, we try the natural candidate $f_1(t) := t$,
and hence $f_2(t) := \ln (2t) - t + 1/2$. Then we solve the above LP with resolution $N=1000$ and $\delta = 2.92/2000$.
The optimal $\lambda$ is larger than $0.835524$. Let the optimal solution be $\Qs,\Rs$.
We note that the maximum entry in $\Qs$ is less than $1.46$.

Now, we are ready to describe the desired symmetric SP mechanism $\tA$.
$\tA$ takes $t_1,t_2$ as bids from the two agents.
Let $\tt_1,\tt_2$ be the values by rounding $t_1,t_2$ to its nearest multiple of $1/N$.
$$\tA(t_1,t_2) ~:=~
\begin{cases}
\Qs(\tt_2)\cdot f_1(t_1) + \Rs(\tt_2),&t_1\in\left[0,1/2\right];\\
\tA\left(\frac 12,\tt_2\right) + \Qs(1-\tt_2)\cdot f_2(t_1),&t_1\in\left(1/2,1\right].
\end{cases}$$

%

The rounding is needed because the domains of $\Qs,\Rs$ are $[N]$.
The rounding does not destroy strategyproofness; one can verify that $\tA$ is SP using Theorem \ref{thm:SP-char}.
\footnote{In general, if an allocation function for agent $i$, denoted by $A_i'(\vecu_i,\vecu_{-i})$, yields a SP mechanism,
then any other allocation function $A_i''(\vecu_i,\vecu_{-i}) \equiv A_i'(\vecu_i,T(\vecu_{-i}))$,
where $T$ is an \emph{arbitrary} function with range compatible with the domain of the second parameter of $A_i'$, yields a SP mechanism too.}


Since the maximum entry of $\Qs$ is less than $1.46$, and since the derivatives of $f_1,f_2$ are bounded by $1$,
$\frac{\partial \tA}{\partial b_1} (b_1,b_2) < 1.46$ for all $b_1,b_2$.
Also, note that $|t_i - \tt_i| \leq 1/2000$. Due to the first constraint of the LP above, $\tA(\tt_1,\tt_2) + \tA(\tt_2,\tt_1) \leq 1-\delta$.
Thus,
$$\tA(t_1,t_2) + \tA(t_2,t_1) ~<~ \tA(\tt_1,\tt_2) + \frac{1.46}{2000} + \tA(\tt_2,\tt_1) + \frac{1.46}{2000} ~\leq~ 1.$$
This verifies the feasibility of $\tA$.

To bound the competitiveness of $\tA$, first note that by Theorem \ref{thm:Rochet},
$$\ucap^{\tA}(t_1,\tt_2) \geq \ucap^{\tA}(\tt_1,\tt_2) - \frac{1}{2000}~~~~\text{and}~~~~\ucap^{\tA}(t_2,\tt_1) \geq \ucap^{\tA}(\tt_2,\tt_1) - \frac{1}{2000}.$$
From the LP, we have
$$\frac{\ucap^{\tA}(\tt_1,\tt_2) + \ucap^{\tA}(\tt_2,\tt_1)}{1+|\tt_1-\tt_2|} ~\geq~ 0.835524.$$
With the above two sets of inequalities, we proceed a simple error analysis to show that the competitiveness of $\tA$ with bids $t_1,t_2$ is
$$\frac{\ucap^{\tA}(t_1,t_2) + \ucap^{\tA}(t_2,t_1)}{1+|t_1-t_2|} ~=~ \frac{\ucap^{\tA}(t_1,\tt_2) + \ucap^{\tA}(t_2,\tt_1)}{1+|t_1-t_2|} ~\geq~ 0.833689,$$
which is strictly larger than $\frac 56+\epsilon$.

\medskip

To program mechanism $\tA$, we need to store the values of $\Qs,\Rs$ at $1001$ discrete values,
and to compute $f_1,f_2$ at arbitrary real values in their domain.
While this may not look compact to some people, $\tA$ is a concrete SP partial mechanism that breaks the $\frac 56$ barrier.

\section{A Compact Upper Bound Proof}\label{sect:human-upper-bound}

In this section, we provide a compact upper bound proof on the competitiveness of SP mechanisms.

First of all, we make the following qualitative observation.
Suppose $\vecu_i = (t_i,1-t_i)$.
For some $t_1,t_2$, if agent $1$ earns a \emph{too} high utility from the mechanism,
then the utility earned by agent $2$ has to be very low, forcing a low competitive ratio;
conversely, if agent $1$ earns \emph{too} low a utility, this forces a low competitive ratio too.
Thus, to attain a high competitive ratio $h$, there is a restricted range of utility values that each agent can earn.
Geometrically, the utilities earned by the agents must lie in the intersection of
$\aur(t_1,t_2)$ and $R_h(t_1,t_2) := \{(r_1,r_2)\,|\,r_1+r_2 \geq h \cdot \swoptnull(t_1,t_2)\}$.

Briefly, our proof strategy is: for some $h,t_1,t_2$,
since the utilities earned by both agents are restricted to certain range, the allocations are restricted too.
Then due to Theorem \ref{thm:Rochet}, the subgradients of $\ucap^A$ at certain points are also restricted.
We then show that if $h$ is too high, the above restrictions add up to forbid the existence of $\ucap^A$.

Let
\begin{align*}
U_h(t_1,t_2) &:= \max\{r_1\,|\,\exists (r_1,r_2)\in\aur(t_1,t_2)\cap R_h(t_1,t_2)\}\\
L_h(t_1,t_2) &:= \min\{r_1\,|\,\exists (r_1,r_2)\in\aur(t_1,t_2)\cap R_h(t_1,t_2)\}.
\end{align*}

\begin{theorem}
For the case with two items, no SP mechanism is better than $0.9523$-competitive.
\end{theorem}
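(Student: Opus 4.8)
My plan is to assume that $A$ is a symmetric SP mechanism that is $h$-competitive and to show, using only Theorem~\ref{thm:Rochet} and the geometry of \aur, that $h$ cannot exceed $0.9523$. The starting point is that for every profile $(t_1,t_2)$ the realised pair $(\ucap^A(t_1,t_2),\ucap^A(t_2,t_1))$ is produced by a feasible allocation, hence lies in $\aur(t_1,t_2)$, and $h$-competitiveness forces it into $R_h(t_1,t_2)$; thus $\ucap^A(t_1,t_2)\ge L_h(t_1,t_2)$ for every profile. I would fix agent~$2$'s report at a value $\tau\in(\tfrac12,1)$ and study the single-variable function $g(b_1):=\ucap^A(b_1,\tau)$, which by Theorem~\ref{thm:Rochet} is convex in $b_1$. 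The contradiction will be that competitiveness forces $g(1)$ above the ceiling that convexity and feasibility together impose on it.

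First I would pin down $L_h$ on the relevant frontier segment. By Proposition~\ref{prop:aur-convex}, $\aur(\tau+\eta,\tau)$ is the convex hull of its four vertex allocations, and for $b_1=\tau+\eta>\tau$ its Pareto frontier has a segment along which agent~$1$ receives only item~$1$; writing $x\in[0,1]$ for the fraction of item~$1$ given to agent~$1$, one gets $r_1=(\tau+\eta)x$, $r_2=1-\tau x$, so $r_1+r_2=1+\eta x$, while $\swoptnull(\tau+\eta,\tau)=1+\eta$. Intersecting with the competitiveness line $r_1+r_2=h\cdot\swoptnull(\tau+\eta,\tau)$ gives $x=h-\frac{1-h}{\eta}$, and hence $L_h(\tau+\eta,\tau)=(\tau+\eta)\!\left(h-\frac{1-h}{\eta}\right)$. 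Specialising to $b_1=1$ (so $\eta=1-\tau$) and using $\ucap^A(1,\tau)=A(1,\tau)$ yields the lower bound $g(1)\ge h-\frac{1-h}{1-\tau}$.

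Next I would produce the competing upper bound on $g(1)$ from convexity plus feasibility. By Theorem~\ref{thm:Rochet} a subgradient of $g$ at $b_1$ is $z=A(b_1,\tau)-A(1-b_1,1-\tau)$, and since both allocation values lie in $[0,1]$ we have $z\le 1$, so $g(1)\le g(\tau)+(1-\tau)$. Evaluating on the diagonal, symmetry over agents makes both agents receive the same fraction $A(\tau,\tau)\le\tfrac12$ of item~$1$ (and $A(1-\tau,1-\tau)\le\tfrac12$ of item~$2$), so $g(\tau)=\tau A(\tau,\tau)+(1-\tau)A(1-\tau,1-\tau)\le\tfrac12$, whence $g(1)\le\tfrac12+(1-\tau)$. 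Combining the two bounds, any $h$-competitive mechanism must satisfy, for every $\tau\in(\tfrac12,1)$, the inequality $h-\frac{1-h}{1-\tau}\le\tfrac12+(1-\tau)$. Writing $\eta=1-\tau$, this reads $h-\tfrac12\le\eta+\frac{1-h}{\eta}$; the right-hand side is minimised at $\eta=\sqrt{1-h}$, so the mechanism can exist only if $h-\tfrac12\le 2\sqrt{1-h}$. This fails once $h$ exceeds the root of $h-\tfrac12=2\sqrt{1-h}$, which is no larger than $0.9523$, proving the theorem.

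The steps that require care are the explicit evaluation of $L_h$ and the verification that the optimal test point is admissible, and these are where I expect the main obstacle to lie. The band $[L_h,U_h]$ collapses as $b_1\to\tau$, so the frontier-intersection formula for $L_h$ is valid only while $\eta>\frac{1-h}{h}$; one must check that the optimiser $\eta=\sqrt{1-h}$ sits comfortably inside this range (it does, since $h$ is close to $1$) and that $\tau=1-\eta$ stays in $(\tfrac12,1)$. The one genuinely load-bearing use of strategyproofness is the slope bound $z\le 1$, which comes from the convexity half of Theorem~\ref{thm:Rochet}: without it only the vacuous estimate $g(1)\le 1$ is available and no contradiction survives. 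Everything else is a finite, closed-form calculation, which is exactly what makes the proof compact and hand-checkable, in contrast with the LP-based upper bounds.
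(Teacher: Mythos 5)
Your proof is correct, and it takes a genuinely different route from the paper's --- one that in fact yields a sharper constant. Both arguments combine Rochet's characterization (Theorem~\ref{thm:Rochet}) with the geometry of $\aur$, but the decompositions are mirror images of each other. The paper fixes two cross-sections, $b_2=0.1$ and $b_2=0$, introduces the unknown $q=\ucap^A(0,0.1)$, extracts from condition (b) two linear lower bounds on these cross-sections (one increasing in $q$, one decreasing), and then exhibits numerically a threshold $q^*=0.6979$ and witnesses $t_1'=0.26$, $t_1''=0.32$ at which one of the two lower bounds must exceed the upper envelope $U_h$ --- strategyproofness pushes $\ucap^A$ up against the ceiling imposed by competitiveness. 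You do the opposite: competitiveness forces $\ucap^A(1,\tau)$ up to the floor $L_h(1,\tau)=h-\frac{1-h}{1-\tau}$, while feasibility plus agent-symmetry on the diagonal ($\ucap^A(\tau,\tau)\le\tfrac12$) and a single subgradient step of slope at most $1$ cap it at $\tfrac12+(1-\tau)$; optimizing over $\tau$ gives the closed-form necessary condition $h-\tfrac12\le 2\sqrt{1-h}$, i.e.\ $h\le\sqrt{6}-\tfrac32\approx 0.9495$, which is strictly below $0.9523$ and hence proves the theorem a fortiori. What your route buys: it is entirely closed-form (no numerical search over $q$ or over test points), it uses only one cross-section plus the diagonal, and it exposes the diagonal --- where symmetry caps each agent's utility at $\tfrac12$ --- as the real bottleneck, a constraint the paper's proof never exploits; what the paper's template buys is a scheme (several cross-sections played against $U_h$) that its authors suggest can be pushed further by adding more cross-sections. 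Two small points to tighten in your write-up: first, state explicitly, as the paper does, that the restriction to symmetric mechanisms is justified by Claim~1 of \cite{GuoC2010}, since your diagonal bound genuinely needs agent-symmetry; second, your appeal to Proposition~\ref{prop:aur-convex} for the ``vertex'' structure of $\aur$ is loose (the proposition gives only convexity), but the fact you actually use is easily checked directly: any feasible allocation at a profile $(t_1,t_2)$ with $t_1>t_2$ satisfies $r_1+r_2\le 1+(t_1-t_2)x_1$, where $x_1$ is agent $1$'s share of item $1$, so $r_1+r_2\ge h\cdot\swoptnull(t_1,t_2)$ forces $x_1\ge h-\frac{1-h}{t_1-t_2}$ and hence $r_1\ge t_1x_1\ge t_1\bigl(h-\frac{1-h}{t_1-t_2}\bigr)$, which is exactly your formula for $L_h$. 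Also, a nitpick on your closing commentary: the load-bearing use of strategyproofness is the subgradient inequality itself (condition (b)); the bound $z\le 1$ is pure feasibility.
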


\begin{proof}
By~\cite[Claim 1]{GuoC2010}, it suffices to prove that no symmetric SP mechanism is better than $0.9523$-competitive.

Suppose the contrary that there exists an $h$-competitive symmetric SP mechanism $A$, where $h=0.9523$.
We compute the following functions explicitly:
$$
U_h(t_1,0.1) = \begin{cases}
1 - \frac{(\frac{11}{10}-t_1)h-1}{\frac{1}{10}-t_1} t_1, & t_1\in \left[0,\frac{11}{10}-\frac{1}{h}\right];\\
1, & t_1 \in \left[\frac{11}{10}-\frac{1}{h},\frac{1}{h}-\frac{9}{10}\right];\\
1 + \frac{(t_1+\frac{9}{10})h-1}{t_1-\frac{1}{10}} (t_1-1), & t_1 \in \left[\frac{1}{h}-\frac{9}{10},1\right].
\end{cases}
$$

$$L_h(t_1,0.1) = \begin{cases}
\frac{(\frac{11}{10}-t_1)h-1}{\frac{1}{10}-t_1} (1-t_1), & t_1\in \left[0,\frac{11}{10}-\frac{1}{h}\right];\\
0, & t_1 \in \left[\frac{11}{10}-\frac{1}{h},\frac{1}{h}-\frac{9}{10}\right];\\
\frac{(t_1+\frac{9}{10})h-1}{t_1-\frac{1}{10}} t_1, & t_1 \in \left[\frac{1}{h}-\frac{9}{10},1\right].
\end{cases}
$$

$$U_h(t_1,0) = \min\left\{1,1 + \frac{(t_1+1)h-1}{t_1} (t_1-1)\right\}.$$
In particular, we have
$$0.4753 = L_h(0,0.1) \leq \ucap^A(0,0.1) \leq U_h(0,0.1) = 1.$$

Let $q := \ucap^A(0,0.1)$. Then $A(1,0.9) = q$. By feasibility, $A(0,0.1)\geq 0$.
Thus, $A(0,0.1) - A(1,0.9) \geq -q$.
By Theorem \ref{thm:Rochet}, for any $t_1\in [0,1]$,
\begin{equation}\label{eq:support1}
\ucap^A(t_1,0.1) \geq q - q t_1.
\end{equation}

Since $A$ is $h$-competitive, $\ucap^A(0.1,0) \geq \frac{11}{10}h-q$.
Since $A(1,0.9) = q$, $A(0.9,1) \leq 1-q$.
Then we have
$$A(0.1,0) ~=~ \frac{\ucap^A(0.1,0)-0.9\cdot A(0.9,1)}{0.1} ~\geq~ 11h-9-q,$$
and hence $A(0.1,0) - A(0.9,1) \geq 11h-10$.
By Theorem \ref{thm:Rochet}, for any $t_1\in [0.1,1]$,
\begin{equation}\label{eq:support2}
\ucap^A(t_1,0) \geq \frac{11}{10}h-q + (11h-10) (t_1-0.1).
\end{equation}

Next, we show that for any possible value of $q$, there exists $t_1$ such that
either condition \eqref{eq:support1} or condition \eqref{eq:support2} is violated.
Observe that when $q$ increases, the lower bound in \eqref{eq:support1} increases,
while the lower bound in \eqref{eq:support2} decreases.
Thus, once we find a $q^*$ such that
\begin{enumerate}
\item[(a)] $\exists t_1'\in [0,1]$ with $q^*-q^* t_1' > U_h(t_1',0.1)$, and
\item[(b)] $\exists t_1'' \in [0.1,1]$ with $\frac{11}{10}h-q^* + (11h-10) (t_1''-0.1) > U_h(t_1'',0)$,
\end{enumerate}
then for any $q\geq q^*$, by \eqref{eq:support1} and (a), $\ucap^A(t_1',0.1) > U_h(t_1',0.1)$, a contradiction;
for any $q < q^*$, by \eqref{eq:support2} and (b), $\ucap^A(t_1'',0) > U_h(t_1'',0)$, again a contradiction.

We find that $q^* = 0.6979$ satisfies (a) and (b), with $t_1' = 0.26$ and $t_1'' = 0.32$.
\end{proof}

\section{A $0.67776$-Competitive Mechanism for Multiple Items}\label{sect:avg-PA}

Cole et al.~\cite{ColeGG2013-EC} introduced a family of SP mechanisms called \emph{Partial Allocation} (PA) mechanisms,
which work for multiple agents and multiple items. We describe the two-agent version $\pa_c$ below.\footnote{\label{fn:pa}Since
the utility functions are normalized, $0<u_2(\veca_2)^c,u_1(\veca_1)^{1/c}\leq 1$. So Step 2 of $\pa_c$ is legitimate.\\
Also, we note that the eventual utility attained by agent 1 is $u_1(\veca_1)\cdot u_2(\veca_2)^c = W(u_1,u_2)$,
and the eventual utility attained by agent 2 is $u_1(\veca_1)^{1/c}\cdot u_2(\veca_2) = W(u_1,u_2)^{1/c}$.}
In~\cite{ColeGG2013-AAMAS}, they showed that a variant of a PA mechanism, which we denote by \pamax,
is SP and it is $\frac 23$-competitive for two agents and multiple items.
We will show that by taking a suitable weighted average of two PA mechanisms and \pamax, we break the $\frac 23$ barrier.

\smallskip

\noindent
\begin{tabular}{|l|l|}
\hline
$\pa_c(\vecu_1,\vecu_2)$ for $0 < c < \infty$: & \pamax$(\vecu_1,\vecu_2)$:\\
1. Compute the feasible allocation $(\veca_1,\veca_2)$ & 1. Compute the allocation of $\pa_1(\vecu_1,\vecu_2)$.\\
~~~~that maximizes $u_1(\veca_1) \cdot u_2(\veca_2)^c$. & 2. Compute the allocation in which each\\
~~~~Let $W(u_1,u_2)$ denote the maximal value. & ~~~~item is split evenly among the two agents.\\
2. Agent $1$ is allocated a $u_2(\veca_2)^c$ fraction of $\veca_1$; & 3. Output the allocation that yields higher\\
~~~~agent $2$ is allocated a $u_1(\veca_1)^{1/c}$ fraction of $\veca_2$. & ~~~~social welfare.\\
~~~~(See footnote \ref{fn:pa}). & \\
\hline
\end{tabular}

%

\smallskip

To build up an intuition on why this might work, we look at an almost worst case scenario for \pamax,
where there are two items, $\vecu_1=(0.99,0.01)$ and $\vecu_2=(0.5,0.5)$.
In $\pa_1$, $\veca_1=(1,0)$ and $\veca_2=(0,1)$,
$u_1(\veca_1) = 0.99$ and $u_2(\veca_2) = 0.5$.
The eventual allocation to agent $1$ is $\frac 12 \veca_1$, i.e., reducing $\veca_1$ by half.
This reduction harms the eventual social welfare hugely.

If we consider $\pa_c$ for some $c$ less than $1$, say $c=0.5$,
the eventual allocation to agent $1$ is much better, which is $0.5^{0.5} \veca_1 \approx 0.707 a_1$.
The eventual allocation to agent $2$ in $\pa_{0.5}$ is $0.99^2 \veca_2$, which is slightly worse than $0.99 \veca_2$,
the eventual allocation to agent $2$ in $\pa_1$.
But overall, the social welfare in $\pa_{0.5}$ is much better.

However, there are bad scenarios for $\pa_{0.5}$, e.g., when the utility functions of the two agents in the last paragraph are swapped.
To attain an overall good competitiveness, we consider some weighted average of $\left(\frac 12\cdot \pa_c + \frac 12\cdot \pa_{1/c}\right)$ and \pamax.
For each choice of $c$ and the weights, by using the tool of AUR,
we can compute the competitive ratio with math software.
We find that for a suitable choice of $c$ and the weights, a competitive ratio of $0.67776$ is attained.

\begin{theorem}
The mechanism
$$\left(\frac{1029}{4000} \cdot \pa_{0.421} + \frac{1029}{4000} \cdot \pa_{1/0.421} + \frac{971}{2000}\cdot \text{\emph{\pamax}}\right)$$
is SP, and it is at least $0.67776$-competitive.
\end{theorem}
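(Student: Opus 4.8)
The plan is to dispatch strategyproofness immediately and then concentrate the effort on competitiveness. The three coefficients are positive and $\frac{1029}{4000}+\frac{1029}{4000}+\frac{971}{2000}=1$, so the mechanism is a genuine convex combination of $\pa_{0.421}$, $\pa_{1/0.421}$ and \pamax, each of which is SP; by the weighted-average closure property recorded in the Preliminaries (any positive convex combination of SP mechanisms is SP when utilities are linear), the combination is SP. For competitiveness I would first collapse the social welfare into a few scalars that see only the attainable utility region. Linearity of the $u_i$ gives $\SW^{\bar A}=\frac{1029}{4000}\SW^{\pa_c}+\frac{1029}{4000}\SW^{\pa_{1/c}}+\frac{971}{2000}\SW^{\pa_{\max}}$ with $c=0.421$. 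Writing $W_\gamma:=\max\{r_1 r_2^{\gamma}:(r_1,r_2)\in\aur(\vecu_1,\vecu_2)\}$, the identities in footnote~\ref{fn:pa} give $\SW^{\pa_\gamma}=W_\gamma+W_\gamma^{1/\gamma}$, the even split contributes social welfare $1$ so that $\SW^{\pa_{\max}}=\max\{2W_1,1\}$, and $\swoptnull(\vecu_1,\vecu_2)=\max\{r_1+r_2:(r_1,r_2)\in\aur(\vecu_1,\vecu_2)\}$. Hence both sides of the target inequality depend only on the region $\aur(\vecu_1,\vecu_2)$.

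The key step is to show that the worst case over all instances, with any number of items, is attained by a two-item instance. By Proposition \ref{prop:aur-convex} the region $\aur(\vecu_1,\vecu_2)$ is convex and downward closed and contains $(0,0)$, $(1,0)$ and $(0,1)$; pick $(a,b)$ attaining $\swoptnull$, so $a,b\le 1$ and $a+b\ge 1$. Replacing $\aur$ by the smaller region $T:=\mathrm{conv}\{(0,0),(1,0),(0,1),(a,b)\}\subseteq\aur(\vecu_1,\vecu_2)$ can only shrink each $W_\gamma$, and since every $\SW^{\pa_\gamma}=W_\gamma+W_\gamma^{1/\gamma}$ is increasing in $W_\gamma$ on $[0,1]$ and $\max\{2W_1,1\}$ is increasing in $W_1$, the numerator $\SW^{\bar A}$ can only decrease; meanwhile $\swoptnull$ is unchanged, because $a+b$ is still attained and is the maximum of $r_1+r_2$ over $T$. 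Thus the competitive ratio on $T$ is no larger, and $T$ is exactly the attainable utility region of the two-item instance whose Pareto apex is $(a,b)=(t_1,1-t_2)$. It therefore suffices to prove $\SW^{\bar A}\ge 0.67776\cdot\swoptnull$ on two-item instances, a two-parameter family indexed by $(t_1,t_2)\in[0,1]^2$.

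On a two-item instance the Pareto frontier is the pair of segments $[(1,0),(a,b)]$ and $[(a,b),(0,1)]$, so each $W_\gamma$ is the maximum of the apex value $ab^{\gamma}$ and the interior maximizers of $r_1 r_2^{\gamma}$ along the two segments, each of which solves a single linear equation and is available in closed form. Substituting $c=0.421$ and the stated weights renders the ratio an explicit, piecewise-smooth function of $(t_1,t_2)$ on a compact domain, and I would finish by minimizing it, splitting into the regimes $2W_1\ge 1$ versus $2W_1<1$ and, within each, the apex-versus-interior cases for $W_c$, $W_{1/c}$ and $W_1$, using calculus together with the machine-assisted search indicated in the text.

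I expect the main obstacle to be this final optimization rather than the reduction, which is clean once the convexity of $\aur$ is in hand. Because the target $0.67776$ lies only marginally below the true worst-case value, the minimization must be certified with guaranteed error control --- for instance a sufficiently fine grid combined with an explicit Lipschitz bound on the ratio, or interval arithmetic --- and the finitely many regime boundaries must each be checked so that no instance is missed. Locating and certifying the extremal $(t_1,t_2)$, consistent with the near-worst-case example $\vecu_1=(0.99,0.01),\vecu_2=(0.5,0.5)$ that motivates the construction, is the quantitative heart of the argument.
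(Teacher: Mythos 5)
Your proposal is correct and takes essentially the same approach as the paper: strategyproofness from the weighted-average closure property, and competitiveness by exploiting convexity of the attainable utility region to reduce to the two-parameter family indexed by the welfare-optimal point $(\us_1,\us_2)$ (the paper restricts to the two segments joining $(1,0)$ and $(0,1)$ to that point, which is functionally identical to your two-item reduction since $r_1\cdot(r_2)^c$ is monotone), followed by a machine-assisted grid minimization at resolution $\frac{1}{2000}$ with certified error control. The only difference is a detail of the certification: the paper handles off-grid instances by rounding the apex down and invoking monotonicity of $r_1\cdot(r_2)^c$ together with the bound $\us_1+\us_2\leq \tilde{u}^*_1+\tilde{u}^*_2+\frac{1}{1000}$, which sidesteps the Lipschitz-constant analysis your grid-plus-Lipschitz variant would require.
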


\begin{proof}
The mechanism is SP since it is a weighted average of three SP mechanisms.

For any $u_1,u_2$, let $(\us_1,\us_2)\in \aur(u_1,u_2)$ be a point that attains the optimal social welfare $\swopt$.
Observe that $(1,0),(0,1)\in \aur(u_1,u_2)$.
Thus by Proposition \ref{prop:aur-convex}, $\aur(u_1,u_2)$ contains the line segment $\ell_1(\us_1,\us_2)$ that joins $(1,0)$ and $(\us_1,\us_2)$,
and also the line segment $\ell_2(\us_1,\us_2)$ that joins $(0,1)$ and $(\us_1,\us_2)$.
Let $\ell(\us_1,\us_2) := \ell_1(\us_1,\us_2) \cup \ell_2(\us_1,\us_2)$.

Recall the notion $W(u_1,u_2)$ defined in $\pa_c$.
Note that $\SW^{\pa_c}(u_1,u_2) = W(u_1,u_2) + W(u_1,u_2)^{1/c}$, and
$$W(u_1,u_2) \geq \max_{(r_1,r_2)\in \ell(\us_1,\us_2)} r_1 \cdot (r_2)^c.$$
These allow us to turn to the easier single-variate optimization problems along the two line segments.

For our choices of $c$ and the weights, we run through all possible values of $(\us_1,\us_2)$,
which is the set $\{(\us_1,\us_2)\,|\,0\leq \us_1,\us_2\leq 1\text{ and }1\leq \us_1+\us_2\leq 2\}$, to compute the competitive ratio.
We perform this with math software, but first only considering discrete points where $\us_1,\us_2$ are multiples of $\frac{1}{2000}$.
The competitive ratio over these discrete points is at least $0.67844$.

For general $(\us_1,\us_2)$, we first round them down to the nearest multiple of $\frac{1}{2000}$ to $(\tilde{u}^*_1,\tilde{u}^*_2)$.
Observe that
$$\max_{(r_1,r_2)\in \ell(\us_1,\us_2)} r_1 \cdot (r_2)^c \geq \max_{(r_1,r_2)\in \ell(\tilde{u}^*_1,\tilde{u}^*_2)} r_1\cdot (r_2)^c,$$
and $1\leq \us_1 + \us_2 \leq \tilde{u}^*_1 + \tilde{u}^*_2 + \frac{1}{1000}$.
Thus, the competitive ratio over general points is at least $0.67844\times \left(1-\frac{1}{1000}\right) > 0.67776$.
\end{proof}

\section{Dynamic-Increasing-Price Mechanisms}\label{sect:DIP}

In this section, we propose a sub-class of SP mechanisms for any number of agents and items.
For simplicity, we describe the general form of DIP mechanisms for the case with two agents.
It is easy to see how to generalize to the cases with any number of agents.

Recall that the number of items is $m$.
For each agent $i$, she has one unit of \emph{virtual money}.
For each item $j$, agent $i$ will be given a price function $P_j^{u_{3-i}}: [0,1]\ra \rr^+ \cup \{0,+\infty\}$,
which is an increasing function that depends on the utility function of the agent $3-i$.
We will write $P_j^{-i}$ as a shorthand for $P_j^{u_{3-i}}$.
The value $P_j^{-i}(y)$ describes the marginal price when agent $i$ has already purchased $y$ units of item $j$.
In other words, if agent $i$ purchases $x_j$ units of item $j$,
she needs to pay $T_j^{-i}(x_j) := \int_0^{x_j} P_j^{-i}(y)\,dy$ unit of her virtual money.
Agent $i$ will purchase an allocation $\vec{x} = (x_1,x_2,\cdots,x_m)$ which is in
$$\argmax_{\substack{\forall j,~0\leq x_j\leq 1 \\ \sum_{j=1}^m T_j^{-i}(x_j) \leq 1}}\,\,\,\,\sum_{j=1}^m \uij x_j.$$

A DIP mechanism is obviously SP, since each agent is accepting prices which she herself cannot influence, and uses them to decide an optimal purchase.
Yet, feasibility is a delicate issue, particularly when $m$ is large.

While there are some similarities between DIP and IP mechanisms,
we note that there is no process of dictator-swapping in DIP, and DIP is not swap-dictatorial and not full in general.
Another difference between DIP and IP is that in DIP the price functions are \emph{price versus quantity of item already purchased},
while in IP they are \emph{price versus virtual money already spent}.

DIP is well motivated, as explained below.
In the first-best allocation, if $u_{1j} > u_{2j}$, then agent $1$ gets all of item $j$.
When $u_{1j}$ is high but $u_{2j}$ is low, ideally we want to construct the price function $P_j^{-1},P_j^{-2}$ such that 
$P_j^{-2}$ is lower to encourage agent $1$ to purchase more item $j$,
but $P_j^{-1}$ is higher to discourage agent $2$ from purchasing more item $j$.
Adjusting the prices dynamically in this manner might help pushing the allocation in DIP towards the first best allocation,
and therefore might be hopeful to attain better competitiveness.

The way that DIP works is similar to \emph{price taking} in the context of Fisher market:
suppose an agent enters a Fisher market where items are sold, the prices she takes surely depend on other agents' preferences,
while she herself can also have influence on the prices --- as a mechanism, DIP deliberately removes such influence by herself.
In a non-rigorous sense, we may say that DIP is a \emph{rescue} for Fisher markets from being non-strategyproof.

We note that Cole et al.~\cite[Theorem 1]{ColeGG2013-AAMAS} proved that when there are two agents and multiple items,
Fisher market equilibrium allocation (which is also the \emph{proportional fair} allocation) is highly competitive against the first-best allocation.

As a simple example, the SP mechanism that allocates half of each item to both agents is DIP,
by setting $P_j^{-i}(y) = 0$ for $y\in\left[0,\frac 12\right]$ and $P_j^{-i}(y) = +\infty$ for $y\in\left[\frac 12,1\right]$.

\begin{prop}
The $\frac 56$-competitive mechanism given in Section \ref{sect:fully-two-items} is a DIP mechanism.
\end{prop}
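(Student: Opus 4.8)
The plan is to exhibit explicit DIP price functions and show that the budget-constrained utility-maximizing purchase under them reproduces the allocation of the $\frac56$-mechanism. Recall that this mechanism gives agent $1$ (with utility vector $(t_1,1-t_1)$, facing agent $2$ with parameter $t_2$) the bundle $x_1 = A(t_1,t_2) = f(t_1) - f(t_2) + \frac12$ of item $1$ and $x_2 = A(1-t_1,1-t_2)$ of item $2$, with $f$ as in \eqref{eq:f-explicit}. Since $f$ increases from $f(0)=0$ to $f(1)=\frac12$, as $t_1$ ranges over $[0,1]$ the quantity $x_1$ ranges over $[\frac12 - f(t_2),\,1 - f(t_2)]$. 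First I would define the marginal price of item $1$ for agent $1$ by $P_1^{-1}(y) = 0$ on $[0,\frac12 - f(t_2)]$, $P_1^{-1}(y) = c\,f^{-1}(y + f(t_2) - \frac12)$ on $[\frac12 - f(t_2),\,1 - f(t_2)]$, and $P_1^{-1}(y) = +\infty$ beyond, where $f^{-1}$ is the inverse of $f$ on its strictly increasing part and $c := (\frac12 - \int_0^1 f)^{-1}$, which is positive since $f < \frac12$ on a set of positive measure. By symmetry over items I would take $P_2^{-1}$ to be the same expression with $t_2$ replaced by $1-t_2$. One checks immediately that each $P_j^{-1}$ is increasing and maps $[0,1]$ into $\rr^+\cup\{0,+\infty\}$, as DIP requires.

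Next I would verify the two conditions that make the target bundle the unique maximizer. The bang-per-buck condition is the easy one: at $y = x_1$ the argument of $f^{-1}$ equals $f(t_1)$, so $P_1^{-1}(x_1) = c\,t_1$, and likewise $P_2^{-1}(x_2) = c\,(1-t_1)$; hence $\frac{t_1}{P_1^{-1}(x_1)} = \frac{1-t_1}{P_2^{-1}(x_2)} = \frac1c$, i.e.\ marginal utility per marginal price is equalized across items. Because the cost functions $T_j^{-1}(x) = \int_0^x P_j^{-1}$ are convex (their integrand is increasing) and the objective is linear, this tangency together with a binding budget pins down the global maximizer of agent $1$'s program.

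The substantive step is to show that the target bundle spends exactly one unit of virtual money, for every $t_1$ and $t_2$. Using the free initial segment to absorb the $t_2$-shift, the cost of item $1$ telescopes to $T_1^{-1}(x_1) = c\int_0^{f(t_1)} f^{-1}(u)\,du$, independent of $t_2$, and similarly for item $2$. Writing $\Phi(z) := \int_0^z f^{-1}$, the total spend is $c[\Phi(f(t_1)) + \Phi(f(1-t_1))]$; differentiating in $t_1$ and using $\Phi'(f(t))=t$ gives derivative $c[t_1 f'(t_1) - (1-t_1) f'(1-t_1)]$, which vanishes by the strategyproofness relation \eqref{eq:SP-char-f}. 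Thus the spend is constant in $t_1$, and evaluating at $t_1 = 1$ gives $c\int_0^{1/2} f^{-1} = c(\frac12 - \int_0^1 f) = 1$ by the choice of $c$. I expect the main obstacle to be the bookkeeping around the flat pieces of $f$ on $[0,\frac15]$ and $[\frac45,1]$, where $f^{-1}$ must be read as a generalized inverse and the corresponding bids map to corner/cap allocations (item $2$ at its cap, item $1$ in its free segment, or vice versa); there I must confirm continuity of the spend across the exempt breakpoints $t_1\in\{1/5,4/5\}$ and that at those corners the agent cannot profitably deviate. The interior identity, however, is exactly \eqref{eq:SP-char-f}, which is precisely what makes ``always exhaust the budget'' compatible with the mechanism's allocation. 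Combining the three steps, the DIP purchase equals $(A(t_1,t_2),\,A(1-t_1,1-t_2))$ for all $t_1,t_2$, so the $\frac56$-mechanism is a DIP mechanism.
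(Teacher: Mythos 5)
Your proposal is correct, and it shares the paper's overall skeleton --- exhibit explicit price functions with a free initial segment of length $\frac12 - f(t_2)$ (the quantity agent $1$ receives no matter what she bids), finite increasing prices over the purchasable range, price $+\infty$ beyond $1-f(t_2)$, and a constant normalized so that the maximal purchase of a single item costs exactly the unit budget --- but the price curve you construct is genuinely different from the paper's. Written as a function of the bid $t_1$ that purchases quantity $y=f(t_1)+\frac12-f(t_2)$ of item $1$, your price is $c\,t_1$ throughout, whereas the paper's is the constant $C$ for $t_1\in\left(\frac15,\frac12\right]$ and $C\,t_1/(1-t_1)$ for $t_1\in\left(\frac12,\frac45\right]$ (this is what its branch $\frac{C}{g(y)}-C$ evaluates to, since $g(y)=1-t_1$ there). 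Both choices satisfy the only two constraints that matter: the bang-per-buck condition $P_1^{-1}(x_1)/P_2^{-1}(x_2)=t_1/(1-t_1)$ at the target bundle, and budget exhaustion identically in $t_1$; the implementing price system is not unique, and yours is a legitimately different member of the family. What your route buys is that the verification --- which the paper omits entirely, stating only the price functions and the normalization of $C$ --- becomes transparent: the tangency condition is immediate, and constant spend follows from \eqref{eq:SP-char-f} together with the area identity $\int_0^{1/2} f^{-1}(u)\,du = \frac12-\int_0^1 f(t)\,dt$. The corner bookkeeping you defer does go through: for $t_1\in\left[0,\frac15\right]$ the target bundle sits at the kink of $T_1^{-1}$ (where the subdifferential is $\left[0,\frac{c}{5}\right]$, and $c\,t_1$ lies in it precisely because $t_1\leq\frac15$) and at the cap of item $2$ (where the subdifferential is $\left[\frac{4c}{5},+\infty\right)$, and $c(1-t_1)\geq\frac{4c}{5}$), so the KKT conditions for the linear-objective, convex-budget-set program remain satisfied, with the symmetric argument for $t_1\geq\frac45$; the deferred step is indeed only bookkeeping.
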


We give the price functions for agent $1$ on item $1$ below; other price functions are defined symmetrically.
Recall that $\vecu_2 = (t_2,1-t_2)$, and also the definition of $f$ in \eqref{eq:f-explicit}.
Let $\tau=\frac 12 - f(t_2)$. Then
$$P_1^{-1} (y) :=
\begin{cases}
0, & y\in \left[0,\tau\right]\\
C, & y\in \left(\tau,f(\frac 12) + \tau\right]\\
\frac{C}{g(y)} - C, & y\in \left(f(\frac 12) + \tau, \frac 12 + \tau\right]\\
+\infty,& y\in \left(\frac 12 + \tau, 1\right],
\end{cases}$$
where $g(y)$ denotes the unique value of $z\in \left[\frac 15,\frac 12\right]$ such that $f(1-z) - f(t_2) + \frac 12 = y$,
and $C$ is the positive constant such that $\int_0^{1-f(t_2)}P_1^{-1} (y)\,dy = 1$.


\section{Discussion and Open Problems}

The most important problem for future research is to seek optimal competitive mechanisms.
As we have already seen, even for the case with two items, where the setting is essentially single-parameter,
the use of analytical tools seems unavoidable --- for instance,
it looks unlikely to have a \emph{natural} interpretation of the $\frac 56$-competitive mechanism,
which we believe to be an optimal full mechanism.
In the study of revenue-optimal mechanisms with prior, more advanced analytical tools,
including duality theory and variational calculus, have played key roles. See, e.g., \cite{PP2011,GK2014,GK2015}.
We believe that such tools will be useful for our problem too;
for instance, duality theory is likely to be useful for showing that the upper bound for full mechanism is \emph{exactly} $\frac 56$.

In Section \ref{sect:human-upper-bound}, we use Rochet's characterization to prove a non-trivial upper bound.
The proof only considers the restrictions to two cross sections of $\ucap^A$,
so clearly it has not yet fully exploited the power of the characterization.
An interesting research agenda is to seek a more sophisticated use of the characterization for proving better upper bounds,
for either the case with two items, or for those with more items.

\section*{Acknowledgments}

The author thanks the anonymous reviewers for their thoughtful suggestions which help improving the structure and presentation of this paper.


\bibliographystyle{alpha}
\bibliography{literature}

\newcommand{\etalchar}[1]{$^{#1}$}
\begin{thebibliography}{DFMN12}

\bibitem[BCD{\etalchar{+}}14]{BranzeiCDFF2014}
Simina Br{\^{a}}nzei, Yiling Chen, Xiaotie Deng, Aris Filos{-}Ratsikas,
  S{\o}ren Kristoffer~Stiil Frederiksen, and Jie Zhang.
\newblock The fisher market game: Equilibrium and welfare.
\newblock In {\em Proceedings of the Twenty-Eighth {AAAI} Conference on
  Artificial Intelligence, July 27 -31, 2014, Qu{\'{e}}bec City, Qu{\'{e}}bec,
  Canada.}, pages 587--593, 2014.

\bibitem[CDE{\etalchar{+}}06]{Chev2006}
Yann Chevaleyre, Paul~E. Dunne, Ulle Endriss, J{\'{e}}r{\^{o}}me Lang, Michel
  Lema{\^{\i}}tre, Nicolas Maudet, Julian~A. Padget, Steve Phelps, Juan~A.
  Rodr{\'{\i}}guez{-}Aguilar, and Paulo Sousa.
\newblock Issues in multiagent resource allocation.
\newblock {\em Informatica (Slovenia)}, 30(1):3--31, 2006.

\bibitem[CGG13a]{ColeGG2013-EC}
Richard Cole, Vasilis Gkatzelis, and Gagan Goel.
\newblock Mechanism design for fair division: allocating divisible items
  without payments.
\newblock In {\em {ACM} Conference on Electronic Commerce, {EC} '13,
  Philadelphia, PA, USA, June 16-20, 2013}, pages 251--268, 2013.

\bibitem[CGG13b]{ColeGG2013-AAMAS}
Richard Cole, Vasilis Gkatzelis, and Gagan Goel.
\newblock Positive results for mechanism design without money.
\newblock In {\em International conference on Autonomous Agents and Multi-Agent
  Systems, {AAMAS} '13, Saint Paul, MN, USA, May 6-10, 2013}, pages 1165--1166,
  2013.

\bibitem[DFMN12]{DokowFMN2012}
Elad Dokow, Michal Feldman, Reshef Meir, and Ilan Nehama.
\newblock Mechanism design on discrete lines and cycles.
\newblock In {\em {ACM} Conference on Electronic Commerce, {EC} '12, Valencia,
  Spain, June 4-8, 2012}, pages 423--440, 2012.

\bibitem[DG10]{DughmiG2010}
Shaddin Dughmi and Arpita Ghosh.
\newblock Truthful assignment without money.
\newblock In {\em Proceedings 11th {ACM} Conference on Electronic Commerce
  (EC-2010), Cambridge, Massachusetts, USA, June 7-11, 2010}, pages 325--334,
  2010.

\bibitem[FLZ09]{FLZ2009}
Michal Feldman, Kevin Lai, and Li~Zhang.
\newblock The proportional-share allocation market for computational resources.
\newblock {\em {IEEE} Trans. Parallel Distrib. Syst.}, 20(8):1075--1088, 2009.

\bibitem[GC10]{GuoC2010}
Mingyu Guo and Vincent Conitzer.
\newblock Strategy-proof allocation of multiple items between two agents
  without payments or priors.
\newblock In {\em 9th International Conference on Autonomous Agents and
  Multiagent Systems {(AAMAS} 2010), Toronto, Canada, May 10-14, 2010, Volume
  1-3}, pages 881--888, 2010.

\bibitem[GK14]{GK2014}
Yiannis Giannakopoulos and Elias Koutsoupias.
\newblock Duality and optimality of auctions for uniform distributions.
\newblock In {\em {ACM} Conference on Economics and Computation, {EC} '14,
  Stanford, CA, USA, June 8-12, 2014}, pages 259--276, 2014.

\bibitem[GK15]{GK2015}
Yiannis Giannakopoulos and Elias Koutsoupias.
\newblock Selling two goods optimally.
\newblock In {\em Automata, Languages, and Programming - 42nd International
  Colloquium, {ICALP} 2015, Kyoto, Japan, July 6-10, 2015, Proceedings, Part
  {II}}, pages 650--662, 2015.

\bibitem[HSTZ11]{HanSTZ2011}
Li~Han, Chunzhi Su, Linpeng Tang, and Hongyang Zhang.
\newblock On strategy-proof allocation without payments or priors.
\newblock In {\em Internet and Network Economics - 7th International Workshop,
  {WINE} 2011, Singapore, December 11-14, 2011. Proceedings}, pages 182--193,
  2011.

\bibitem[PP11]{PP2011}
Christos~H. Papadimitriou and George Pierrakos.
\newblock On optimal single-item auctions.
\newblock In {\em Proceedings of the 43rd {ACM} Symposium on Theory of
  Computing, {STOC} 2011, San Jose, CA, USA, 6-8 June 2011}, pages 119--128,
  2011.

\bibitem[PT09]{ProcacciaT2009}
Ariel~D. Procaccia and Moshe Tennenholtz.
\newblock Approximate mechanism design without money.
\newblock In {\em Proceedings 10th {ACM} Conference on Electronic Commerce
  (EC-2009), Stanford, California, USA, July 6--10, 2009}, pages 177--186,
  2009.

\bibitem[Roc85]{Rochet1985}
Jean-Charles Rochet.
\newblock The taxation principle and multi-time hamilton-jacobi equations.
\newblock {\em Journal of Mathematical Economics}, 14(2):113--128, 1985.

\bibitem[SV02]{SchummerV2002}
James Schummer and Rakesh~V. Vohra.
\newblock Strategy-proof location on a network.
\newblock {\em J. Economic Theory}, 104(2):405--428, 2002.

\end{thebibliography}

\end{document}